\newtheorem{proposition}{\bf Proposition}
\newtheorem{proof}{Proof}
\begin{document}
\title{Cache-enabled Heterogeneous Cellular Networks: Comparison and Tradeoffs}
\author{	
 	\IEEEauthorblockN{{\large Dong Liu and Chenyang Yang}}\\
	\IEEEauthorblockA{Beihang University, Beijing, China\\
		Email: \{dliu, cyyang\}@buaa.edu.cn}%\ \ \ \ \ \ \ \ \ \ \ \ \ \ \ \ \ $\:$
\thanks{This work was supported in part by National Natural Science Foundation of China (NSFC) under Grant 61120106002 and National Basic Research Program of China (973 Program) under Grant 2012CB316003.}
} \maketitle

\begin{abstract}
Caching popular contents at base stations (BSs) is a promising
way to unleash the potential of cellular heterogeneous networks
(HetNets), where backhaul has become a bottleneck. In this paper, we
compare a cache-enabled HetNet
where a tier of multi-antenna macro BSs is overlaid by a tier of helper nodes having caches but no backhaul
with a conventional HetNet where the macro BSs tier is overlaid by
a tier
of pico BSs with limited-capacity backhaul. We resort stochastic  geometry  theory to derive the
area spectral efficiencies (ASEs) of these two kinds of HetNets and obtain the closed-form
expressions under a special case. We use numerical results to show
that the helper density is only 1/4
of the pico BS density to achieve the same target ASE, and the helper density can be further reduced 
 by increasing  cache capacity. With given total cache capacity within an area,
there exists an optimal helper node density that maximizes the ASE.
\end{abstract}

%\begin{IEEEkeywords}
%
%\end{IEEEkeywords}

\section{Introduction}
To support the 1000-fold higher throughput in the fifth-generation (5G)
cellular systems, a promising way is to densify the network
by deploying more small base stations (BSs) in a macro cell \cite{densification}. Such
heterogeneous networks (HetNets)  can increase the area spectral efficiency (ASE)
\cite{Heterogeneous}, which largely relies on
high-speed backhaul links. Although optical
fiber can provide high capacity, bringing fiber-connection to every single
small BS is rather labor-intensive and expensive. Alternatively, digital
subscriber line (DSL) or microwave backhaul may easily become a bottleneck
and frustratingly impair the throughput gain brought by the network densification
\cite{Femtocell}.

Recently, it has been observed that a large portion of mobile data
traffic is generated by many duplicate downloads of a few popular contents
\cite{woo2013comparison}. On the other hand, the storage capacity of today's
memory devices grows rapidly at a relatively low cost. Motivated by these facts, the authors  in \cite{Andy2012}
 suggested to replace small BSs by the BSs  that have weak backhaul links (or even completely without backhaul)
but have high capacity caches, called \emph{helper nodes}. By optimizing the caching policies to serve
more users under the constraints of file downloading time, large throughput
gain was reported. Considering small cell networks (SCNs) with backhaul of
very limited capacity, the
authors in \cite{Procach14} observed that the backhaul traffic load can be
reduced by caching files at the small BSs  based on their popularity. These results indicate that by fetching
contents locally instead of fetching from core network via backhaul links redundantly,
equipping caches at BSs is a promising way to unleash the
potential of HetNet.

Nonetheless, the performance gain of a cache-enabled HetNet over a
conventional HetNet with limited-capacity backhaul is still unknown. In \cite{GlobalSIP,Dong}, both the
throughput and energy efficiency of homogeneous cached-enabled cellular networks with hexagonal cells were analyzed. For HetNets or SCNs, however, it is more appropriate to use Poisson Point Process (PPP) to model the BS location \cite{flexible}. Stochastic geometry method
\cite{tractable} was first applied in \cite{EURASIP} for a homogeneous
cache-enable SCN where average delivery rate was
derived by assuming that the delivery rate is a constant when channel capacity exceeds a threshold. In \cite{pushing}, the throughput of a cache-enabled network with content pushing to users, device-to-device communication and caching at relays was derived, where every node (including the macro BS (MBS) and relay) is with a single antenna and with high-capacity backhaul. In
\cite{tony}, the file transmission success probability of cooperative transmission among helper nodes was analyzed, where the helpers and BSs are operated in orthogonal bandwidth.

In this paper, we investigate the benefits of cache-enabled
HetNet with respect to conventional HetNet with limited-capacity backhaul, and reveal
the tradeoff in deploying cache-enabled HetNet. We consider two kinds of
HetNets, where a tier of multi-antenna MBSs is overlaid with either
a tier of pico BSs (PBSs) with limited-capacity backhaul or a tier of helper nodes
with caches but without backhaul connection, and the two tiers are full frequency reused. We  derive the
average ASEs of the two kinds of HetNets respectively as functions of
BS/helper node density, user density, storage size, file popularity and
backhaul capacity, and obtain closed-form expressions of ASEs under
a special case. We first use simulations to validate our analysis. Then, we use
numerical results to show the merits of the cache-enabled HetNet: (1) It can double the ASE over
conventional HetNet with the same PBS/helper density. (2) The helper density is only a quarter of the PBS density to achieve the same target ASE, which
can reduce the cost of deployment and operation remarkably. Moreover, we
find that the helper density can be traded off by the cache capacity to achieve a target ASE. Given the total cache capacity within an
area, there exists an optimal helper density that maximizes the ASE.

\section{System model}
We consider two kinds of HetNets, as shown in
Fig. \ref{fig:layout}.

\begin{enumerate}
	\item Conventional HetNet: A tier of MBSs is overlaid with a tier of denser PBSs. The PBSs are connected to the core network via limited-capacity backhaul links.
	\item Cache-enabled HetNet: A tier of MBSs is overlaid with a tier of denser helper nodes. The helpers are not connected to core network via backhaul but have caches. Each helper can  cache $N_c$ files, which have been placed at the helper during
	off-peak times by broadcasting.
\end{enumerate}

\begin{figure}[!htb]
	\centering
	\subfigure[Pico BSs with backhaul]{
		\label{fig:pico} %% label for first subfigure
	\hspace{0mm}	\includegraphics[height=0.17\textwidth]{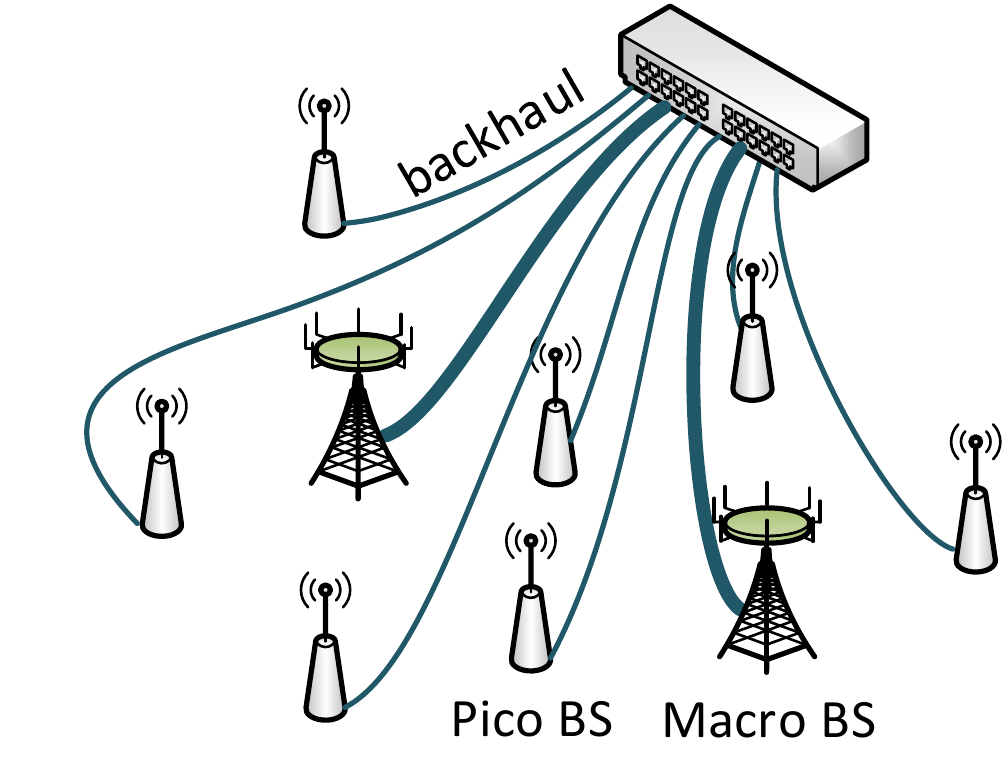} \hspace{2mm} }
	\subfigure[Helper nodes with caches]{
		\label{fig:helper} %% label for second subfigure
	\hspace{1mm}	\includegraphics[height=0.16\textwidth]{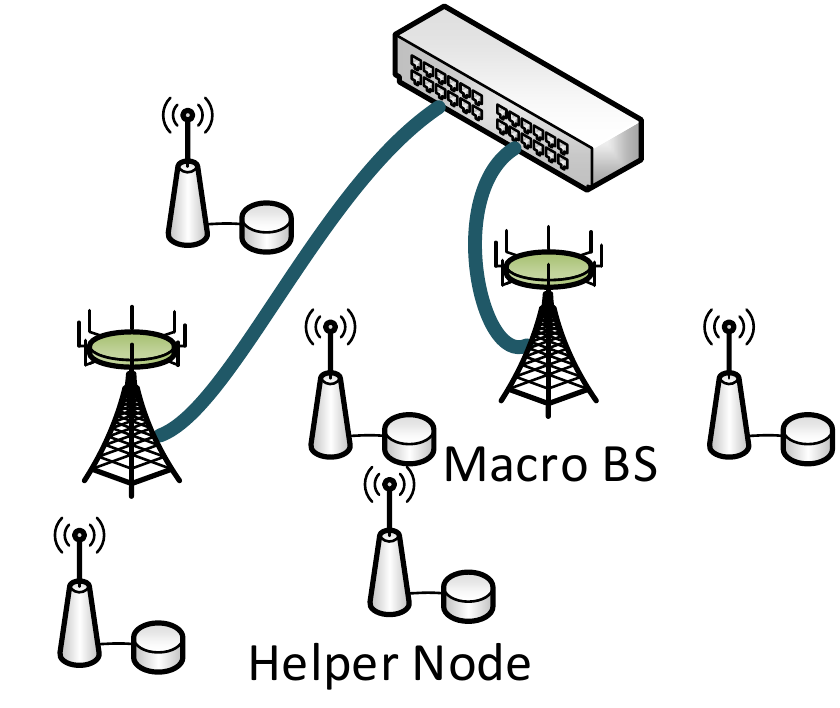}\hspace{2mm}}
	\caption{Layouts of the considered two kinds of HetNets.}
	\label{fig:layout} %% label for entire figure
\end{figure}

The distribution of MBSs, PBSs/helpers and users are modeled
as three independent homogeneous PPPs, denoted as
$\Phi_1$, $\Phi_2$ and $\Phi_u$ with density $\lambda_1$, $\lambda_2$ and
$\lambda_u$, respectively. Each MBS is equipped with $M_1\geq 1$ antennas, and
each PBS or helper node is equipped with $M_2 = 1$ antenna. The transmit
power at each BS (or helper node) and pathloss exponent in the $k$th tier are $P_k$ and
$\alpha_k$, respectively. Since the MBSs are able to connect to the
core network via optical fiber with high capacity, while PBSs
usually employ cost-effective DSL or microwave backhaul with low capacity, the backhaul
capacity of each MBS is assumed as unlimited while the backhaul
capacity of each PBS is assumed as a finite value, $C_{\rm bh}$.\footnote{For example, the average download speed of current DSL in
USA is about $5$ Mbps \cite{golrezaei2013femtocaching}. For
notational simplicity, the backhaul capacity $C_{\rm bh}$ in our analysis is normalized
by the downlink transmission bandwidth $W$ in unit of nat/s/Hz, e.g., when $W = 20$ MHz with $10$ Mbps
backhaul, $C_{\rm bh} = 0.5$ bps/Hz $= 0.347$ nats/s/Hz.}

We assume that each user randomly requests a file from a static content
catalog that contains $N_f$ files. The files are indexed according to the
popularity, ranking from the most popular (the
1${st}$ file) to  the least popular (the $N_f$th file). The probability of requesting the $f$th file follows a Zipf-like
distribution \cite{breslau1999web}, i.e., $ p_f = {f^{-\delta}}\big/ {\sum_{n = 1}^{N_f}n^{-\delta}} $,
where the skew parameter $\delta$ determines the ``peakiness" of the
distribution, whose typical value is between 0.5 and 1.0. For mathematical simplicity, we assume that the
files are with unit size. Hence, the cache capacity of each helper node is
$N_c$.

We assume that $\lambda_u \gg \lambda_1$ such that each MBS has at least $M_1$ users to serve. Since the density of PBSs (or helper nodes) may become comparable with the density of users
along with the continuous network densification, some PBSs may have no users to serve. These inactive BSs will be
turned into idle mode to avoid interference. Each MBS
randomly selects $M_1$ users to serve at each time slot by zero-forcing
beamforming (ZFBF) with equal power allocation, while
each active PBS (or helper node) randomly selects one user to serve
at each time slot with full power.
These assumptions define a simple but typical scenario,
which however can capture the basic elements and reflect fundamental
tradeoffs.

Denote $k\in\{1,2\}$ as the index of the tier  which a randomly chosen user (called the
typical user) is
associated with, and denote $r_k$ as the distance between the user and its associated
BS $b_{k,0}$. The receive signal-to-interference-plus-noise ratio (SINR) of the typical user associated with BS $\!b_{k,0}$ is  
\begin{equation}
 \gamma_k \! = \! \frac{\frac{P_k}{M_k} h_{k,0} r_k^{-\alpha_k}}{\sum_{j=1}^{2} \sum_{i\in \tilde \Phi_j \backslash b_{k,0}}\!\! P_j h_{j,i} r_{j,i}^{-\alpha_j}  +  \sigma^2} \triangleq \frac{\frac{P_k}{M_k} h_{k,0} r_k^{-\alpha_k}}{I_k  +  \sigma^2} \hspace{-1mm}\label{eqn:gamma} 
\end{equation}
where $r_{j,i}$ is the distance between the typical user and the $i$th
active BS of the $j$th tier,  $\tilde \Phi_j$ represents the set of active
BSs in the $j$th tier, $h_{k,0}$ is the equivalent signal channel (including small-scale fading and precoding) from BS $b_{k,0}$ with unit mean, $h_{j,i}$ is the equivalent
interference channel from the $i$th BS in the $j$th tier, and $\sigma^2$ is the noise power. We consider Rayleigh
fading channels. Then, $h_{k,0}$ follows exponential distribution with unit mean
(i.e., $h_{k,0} \sim \exp(1)$), and  $h_{j,i}$ follows gamma distribution
with shape parameter $M_j$ and unit mean (i.e., $h_{j,i}\sim \mathbb{G}(M_j,
\frac{1}{M_j})$) \cite{adhoc}. Note that the distribution of $h_{j,i}$ is very different from that assumed in
\cite{pushing,tony} for cache-enabled networks, where
all the BSs have a single antenna. For notational simplicity, we define the interference power $I_k \triangleq \sum_{j = 1}^{2}I_{k,j} \triangleq \sum_{i\in \tilde \Phi_j \backslash b_{k,0}} P_j h_{j,i} r_{j,i}^{-\alpha_j}$.

\section{ASE of Conventional HetNet}
Consider that each user is associated with the BS with the strongest average receive power, say $P_{{\rm r}, j} = P_j {r}_j^{-\alpha_j}$ for the BS in the $j$th tier. For notational brevity, we define the normalized
transmit antenna number, transmit power and pathloss exponent
of the $j$th tier when the typical user is associated with
the $k$th tier as $ \widehat M_j = \frac{M_j}{M_k}$, $\widehat P_j
=\frac{P_j}{P_k}$, and $\widehat \alpha_j
=\frac{\alpha_j}{\alpha_k}$, respectively. Note that $\widehat M_k = \widehat
P_k =\widehat \alpha_k = 1$.

For notational simplicity, the data rate of the typical user is
expressed in unit of nats/s/Hz. The instantaneous
achievable rate of a typical user associated with the macro tier is 
\begin{equation}
R_1 = \ln(1+\gamma_1) \label{eqn:R1}
\end{equation}
Different from the macro tier, the achievable rate of a typical user
associated with the pico tier can not exceed the backhaul capacity, which is 
\begin{equation}
R_2 = \left\{\begin{array}{ll}
\ln(1+\gamma_2), & \text{if}~\ln(1+\gamma_2) \leq C_{\rm bh} \\
C_{\rm bh}, & \text{if}~ \ln(1+\gamma_2) >  C_{{\rm bh}}
\end{array}
\right. \label{eqn:R2} 
\end{equation}

In the sequel, we first
derive the average achievable rate of the typical user  associated with
the pico tier, which is 
\begin{equation}
 \bar R_2  = \mathbb{E}_{\gamma_2, r_2} [R_2] = \int_{0}^{\infty} \mathbb{E}_{\gamma_2}[R_2 | r ] f_{r_2}(r) dr \label{eqn:ERk} 
\end{equation}
where $\mathbb{E}_{\gamma_2}[R_2 | r ]$ is the conditional expectation of
$R_2$ conditioned on $r_2 = r$, $f_{r_k}(r) = \frac{2\pi\lambda_k}{\mathcal{P}_k}re^{-\pi\sum_{j=1}^{2}\lambda_j \widehat{P}_j {}^{{2}/{\alpha_j}} r^{{2}/{\widehat\alpha_j}} }$ is the probability density function (PDF) of the distance between the typical user and its serving BS, and $\mathcal{P}_k  = 2 \pi \lambda_k \int_{0}^{\infty} r e^{-\pi \sum_{j=1}^{2} \lambda_j \widehat{P}_j {}^{{2}/{\alpha_j}} r^{{2}/{\widehat\alpha_j}}}dr$ is the probability that the typical user is associated with the $k$th tier, which are given in Lemma 3 and Lemma 1 of \cite{flexible}, respectively.

Since $\mathbb{E}[X] = \int_{0}^{\infty}\mathbb{P}[X>x]dx$ for $X>0$ with $\mathbb{P}[\cdot]$ denoting probability, we
obtain 
\begin{align}
& \mathbb{E}_{\gamma_2} [R_2 | r ]\!  = \!\!\int_{0}^{\infty} \mathbb{P}[R_2 > x | r ] dx \overset{(a)}{=} \!\! \int_{0}^{C_{\rm bh}}  \mathbb{P}[\ln(1+\gamma_2) > x | r ] dx \nonumber\\
& \hspace*{-1mm} \overset{(b)}{=} \int_{0}^{C_{\rm bh}} \mathbb{E}_{I_2} [\mathbb{P}[h_{2,0} > M_2P_2^{-1}r^{\alpha_2} (I_2+\sigma^2) (e^x-1) | r, I_2 ]]dx \nonumber\\
& \hspace*{-1mm}\overset{(c)}{=} \int_{0}^{C_{\rm bh}} \mathbb{E}_{I_2} [e^{-M_2P_2^{-1}r^{\alpha_2}(I_2 + \sigma^2)(e^x-1)}] dx \nonumber\\
%&\hspace*{-1mm}  = \int_{0}^{C_{\rm bh}} e^{-M_2P_2^{-1}r^{\alpha_2} (e^x-1) \sigma^2} \mathcal{L}_{I_2} (M_2P_2^{-1} r^{\alpha_2} (e^x-1)) dx \nonumber\\
& \hspace*{-1mm}\overset{(d)}{=} \!\!\int_{0}^{C_{\rm bh}} \!  e^{-\frac{M_2}{P_2}r^{\alpha_2}  (e^x-1)\sigma^2} \! \prod_{j=1}^{2} \!\mathcal{L}_{I_{2,j}} \!\left(\tfrac{M_2 }{P_2}r^{\alpha_2} (e^x-1)\right) dx \label{eqn:ERkrk}
\end{align}
where step $(a)$ comes from the fact that $\mathbb{P}[R_2 > C_{\rm bh}] = 0$
due to \eqref{eqn:R2}, step $(b)$ is from \eqref{eqn:gamma} and using the law of total probability, step $(c)$ is
from $h_{k0} \sim \exp(1)$,  step $(d)$ follows because $\mathcal{L}_{\sum_{j}I_{k,j}}(s)
=\prod_{i} \mathcal{L}_{I_{k,j}}(s)$, and $\mathcal{L}(\cdot)$ denotes the Laplace
transform.

To derive the Laplace transform of
$I_{k,j}$, we model the distribution of active BS $\tilde{\Phi}_{k}$ as a homogeneous PPP
with density $p_{{\rm a},k}\lambda_k$ by thinning the BS distribution ${\Phi}_{k}$
as in  \cite{economy}, where $p_{{\rm a}, k}$ is the probability that a BS in the $k$th tier is active, which can be derived as 
\begin{equation}
 p_{{\rm a},k}  = 1 - \int_{0}^{\infty} e^{-\lambda_u x} f_{S_k}(x)dx \approx 1 - \left(1 + \tfrac{\mathcal{P}_k\lambda_u}{3.5\lambda_k}\right)^{-3.5} 
\end{equation}
where $f_{S_k}(x)$ is the PDF of the service area  $S_{k}$ of a BS in the $k$th tier, and the approximation comes from $	f_{S_k}(x) \approx \frac{3.5^{3.5}}{\Gamma(3.5)} \big(\tfrac{\lambda_k}{\mathcal{P}_k}\big){}^{3.5}x^{2.5}e^{-3.5{\lambda_kx}/{\mathcal{P}_k}}$ in \cite{offloading}, which is exact when the HetNet degenerates into a homogeneous
network. Note that for $k=1$ (i.e., the MBS tier), $p_{{\rm a},1}$ exactly equals to 1
since $\frac{\lambda_u}{\lambda_1} \to \infty$.
\begin{proposition}
The Laplace transform of the interference from the $j$th tier for a user
associated with the $k$th tier is 
\begin{multline}
\mathcal{L}_{I_{k,j}} (s) =
 \exp\Big(-\pi p_{{\rm a}, j}\lambda_j \widehat{P}_j{}^{\frac{2}{\alpha_j}}r^{\frac{2}{\widehat{\alpha}_j}} \times \\
 	 \big({}_2F_1 \big[-\tfrac{2}{\alpha_j}, M_j; 1-\tfrac{2}{\alpha_j}; -\tfrac{ sP_j }{ M_j\widehat{P}_j}r^{-\alpha_k}\big] - 1\big)\Big) \label{eqn:Laplace}
\end{multline}
where ${}_2F_1[\cdot]$ denotes the Gauss hypergeometric function.
\end{proposition}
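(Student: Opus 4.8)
The plan is to evaluate $\mathcal{L}_{I_{k,j}}(s)=\mathbb{E}[e^{-sI_{k,j}}]$ directly from $I_{k,j}=\sum_{i\in\tilde\Phi_j\backslash b_{k,0}}P_j h_{j,i}r_{j,i}^{-\alpha_j}$ by separating the two sources of randomness: the i.i.d.\ fading gains $h_{j,i}\sim\mathbb{G}(M_j,\tfrac{1}{M_j})$ and the point locations of the thinned PPP $\tilde\Phi_j$ of density $p_{{\rm a},j}\lambda_j$. First I would condition on the locations and average over the gains. Because the Laplace transform of a gamma variable of shape $M_j$ and unit mean is $(1+t/M_j)^{-M_j}$, every interferer at distance $\rho$ contributes the factor $(1+\tfrac{sP_j}{M_j}\rho^{-\alpha_j})^{-M_j}$, and independence across interferers turns the expectation into a product over the points of $\tilde\Phi_j$.

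Applying the probability generating functional of a homogeneous PPP then gives $\mathcal{L}_{I_{k,j}}(s)=\exp\!\big(-2\pi p_{{\rm a},j}\lambda_j\int_{\rho_{\min}}^{\infty}[1-(1+\tfrac{sP_j}{M_j}\rho^{-\alpha_j})^{-M_j}]\,\rho\,d\rho\big)$. The crucial modeling point is that the lower limit is not zero: under the strongest-average-power association, a tier-$j$ BS at distance $\rho$ can act as interferer only if $P_j\rho^{-\alpha_j}\le P_k r^{-\alpha_k}$, i.e.\ $\rho\ge\rho_{\min}:=\widehat P_j^{1/\alpha_j}r^{1/\widehat\alpha_j}$ (using $\alpha_k/\alpha_j=1/\widehat\alpha_j$). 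For $j=k$ this reduces to $\rho_{\min}=r$, recovering the fact that the serving BS is the nearest in its own tier.

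Next I would reduce the integral to a dimensionless form. Substituting $\rho=\rho_{\min}w$ factors out $\rho_{\min}^2=\widehat P_j^{2/\alpha_j}r^{2/\widehat\alpha_j}$ --- exactly the prefactor appearing in the statement --- and collapses the fading argument to $\beta:=\tfrac{sP_j}{M_j\widehat P_j}r^{-\alpha_k}$. A second substitution $v=w^{-\alpha_j}$ turns the remaining integral into $\tfrac{1}{\alpha_j}\int_0^1[1-(1+\beta v)^{-M_j}]\,v^{-2/\alpha_j-1}\,dv$, which converges at $v=0$ precisely when $\alpha_j>2$, the standard finite-interference condition.

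The main obstacle is identifying this last integral as a Gauss hypergeometric function. The route I expect to work is to expand $(1+\beta v)^{-M_j}=\sum_{n\ge0}\tfrac{(M_j)_n}{n!}(-\beta v)^n$, integrate term by term using $\int_0^1 v^{n-2/\alpha_j-1}dv=\tfrac{1}{\,n-2/\alpha_j}$, and compare the resulting power series in $-\beta$ with the series for ${}_2F_1[-\tfrac{2}{\alpha_j},M_j;1-\tfrac{2}{\alpha_j};-\beta]$. The key simplification is the Pochhammer telescoping $\tfrac{(-2/\alpha_j)_n}{(1-2/\alpha_j)_n}=\tfrac{-2/\alpha_j}{\,n-2/\alpha_j}$ for $n\ge1$, which matches the two series term by term and shows the integral equals $\tfrac{1}{2}\big({}_2F_1[\cdots]-1\big)$. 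The factor $\tfrac12$ absorbs the $2$ in the $2\pi$ prefactor, leaving $\pi p_{{\rm a},j}\lambda_j\rho_{\min}^2({}_2F_1[\cdots]-1)$ in the exponent, which is the claimed expression.
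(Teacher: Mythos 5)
Your proposal is correct and follows essentially the same route as the paper's proof: average over the gamma fading to get the per-interferer factor $(1+\tfrac{sP_j}{M_j}\rho^{-\alpha_j})^{-M_j}$, apply the probability generating functional of the thinned PPP with the exclusion radius $\widehat P_j^{1/\alpha_j}r^{1/\widehat\alpha_j}$, and identify the resulting integral with ${}_2F_1$ via a term-by-term series expansion. The only difference is cosmetic: you rescale to a definite integral over $[0,1]$ and match series using the Pochhammer telescoping $\tfrac{(-2/\alpha_j)_n}{(1-2/\alpha_j)_n}=\tfrac{-2/\alpha_j}{n-2/\alpha_j}$, whereas the paper substitutes $u^2=v$, computes the corresponding indefinite integral as a hypergeometric function, and then inserts the limits --- the two calculations are equivalent.
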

\begin{proof}
	See Appendix A.
\end{proof}
With \eqref{eqn:Laplace} and \eqref{eqn:ERkrk}, the average rate in \eqref{eqn:ERk} becomes 
\begin{multline}
 \bar R_2 = \frac{2\pi\lambda_2}{\mathcal{P}_2} \int_{0}^{\infty}\!\! \int_{0}^{ C_{{\rm bh}}} \exp\bigg(- M_2 P_2^{-1} r^{\alpha_2} (e^x-1) \sigma^2\\
 - \pi \sum_{j=1}^{2} \lambda_j\widehat{P}_j{}^{\frac{2}{\alpha_j}}r^{\frac{2}{\widehat{\alpha}_j}}  \big( 1 + p_{a,j} \mathcal{Z}_j(x) \big) \bigg)  rdxdr  \label{eqn:ER2}
\end{multline}
where $\mathcal{Z}_j(x) \triangleq {}_2F_1 \big[-\tfrac{2}{\alpha_j}, M_j;
1-\tfrac{2}{\alpha_j}; \tfrac{(1 - e^x) }{\widehat M_j}\big] -
1$.

By letting $C_{\rm bh} \to \infty$, the average achievable rate
of a typical user associated with the macro tier can be similarly derived as
\begin{multline}
  \bar R_1 = \frac{2\pi\lambda_1}{\mathcal{P}_1} \int_{0}^{\infty}\!\! \int_{0}^{ \infty}\! \exp\bigg(-M_1 P_1^{-1}r^{\alpha_1}(e^x-1) \sigma^2 \\
  - \pi\sum_{j=1}^{2} \lambda_j \widehat{P}_j{}^{\frac{2}{\alpha_j}}r^{\frac{2}{\widehat{\alpha}_j}}  ( 1 + p_{a,j} \mathcal{Z}_j(x) ) \bigg)  rdxdr  \label{eqn:ER1}
\end{multline}

Since each active BS randomly selects $M_k$ users,
the average throughput of  a
randomly chosen active cell in the $k$th tier is $M_k\bar{R}_k$. The ASE, defined as the average throughput of a network per unit area
\cite{Quek}, can be obtained as 
\begin{equation}
  {\rm ASE} = \sum_{j=1}^{2}p_{{\rm a},j}\lambda_j M_j\bar{R}_j \label{eqn:ASE1}
\end{equation}
where $ p_{{\rm a}, j} \lambda_j $ is the density of active BSs in the $j$th
tier.

Although $\bar R_1$ and $\bar R_2$ can be numerically computed from \eqref{eqn:ER1} and
\eqref{eqn:ER2}, the computational complexity is very high. In the
following, we obtain closed-form expressions for approximated $\bar
R_{1}$ and $\bar R_{2}$ in a special case, which are accurate even for the general cases as illustrated by simulations later. 
\subsection{Special Case}
Since HetNets are usually interference-limited, it is reasonable to neglect the thermal noise \cite{flexible}, i.e., $\sigma^2 = 0$. Furthermore, we consider  equal path loss exponents of both tiers, $\alpha_j = \alpha$. Then, after a
change of variables $r^2 = v$, $\bar R_2$  in \eqref{eqn:ER2} can be further
derived as   
\begin{equation}
  \bar R_2 =  \frac{\lambda_2}{\mathcal{P}_2}\int_{0}^{ C_{{\rm bh}}} \bigg(\sum_{j=1}^{2} \lambda_j \widehat{P}_j{}^{\frac{2}{\alpha}} \big( 1  + p_{{\rm a},j} \mathcal{Z}_j(x) \big)\bigg)^{-1}dx  \label{eqn:sER2} 
\end{equation}
To derive
a closed-form expression, we first obtain the
approximation of $\mathcal{Z}_j(x)$ defined in  \eqref{eqn:ER2}. From the series-form expression
${}_2F_1 [a,b;c;z] = \sum_{n = 0}^{\infty} \frac{(a)_n
(b)_n}{(c)_n} z^n$, where $(x)_n \triangleq x(x+1) \cdots (x + n - 1)$
denotes the rising Pochhammer symbol,
we can approximate
$\mathcal{Z}_j(x)$ for small value of $x$ as 
\begin{align}
  \mathcal{Z}_j(x)&   = \frac{2}{2 - \alpha}\frac{M_j}{\widehat{M}_j}  (1 - e^x) + \mathcal{O}\left((e^x - 1)^2\right)  \nonumber\\
&   \overset{(a)}{=} \frac{2M_k }{\alpha - 2} x + \mathcal{O}(x^2) \approx \frac{2M_k}{\alpha - 2}  x \triangleq \mathcal{Z}_{j,low}(x) \label{eqn:2F1low} 
\end{align}
where step (a) is from $1-e^x = x + \mathcal{O}(x^2)$, and the approximation
is accurate when $1 - e^x \ll 1$, i.e., $x \ll \ln 2$. When the backhaul
capacity is very stringent, i.e., $C_{\rm bh} \to 0$, by substituting
\eqref{eqn:2F1low} into \eqref{eqn:sER2} and then using $\mathcal{P}_k = {\lambda_k}/{\sum_{j=1}^{2}\lambda_j
\widehat P_j{}^{\frac{2}{\alpha}}}$ derived from Lemma 1 in \cite{flexible}, $\bar R_2$ can be approximated as
\begin{align}
\bar R_2 \approx &   \frac{\lambda_2}{\mathcal{P}_2}\int_{0}^{ C_{{\rm bh}}} \bigg(\sum_{j=1}^{2} \lambda_j\widehat{P}_j{}^{\frac{2}{\alpha}}  \big( 1 + p_{a,j} \mathcal{Z}_{j,low}(x) \big) \bigg)^{-1} dx \nonumber \\
= &    \frac{\alpha-2}{2M_2}~ \mathcal{C}_1 \ln \left(1+ \frac{2M_2}{\alpha-2}\cdot\frac{C_{\rm bh}}{\mathcal{C}_1}\right)  \label{eqn:cER2} 
\end{align}
where $\mathcal{C}_1
\triangleq \big( {\sum_{j=1}^{2}\lambda_j \widehat P_j{}^{\frac{2}{\alpha}}}
\big) / \big( {\sum_{j=1}^{2}p_{{\rm a},j} \lambda_j
\widehat{P}_j{}^{\frac{2}{\alpha}}}\big)$.

Similar as deriving \eqref{eqn:sER2}, $\bar{R}_1$ in
\eqref{eqn:ER1} can be derived as 
\begin{equation}
  \bar R_1  = \frac{\lambda_1}{\mathcal{P}_1}\int_{0}^{\infty} \bigg(\sum_{j=1}^{2} \lambda_j \widehat{P}_j{}^{\frac{2}{\alpha}} \big( 1+ p_{{\rm a},j} \mathcal{Z}_j(x)\big)\bigg)^{-1}  dx  \label{eqn:sER1} 
\end{equation}
By using the following transformation \cite[eq. (9.132)]{jeffrey2007table},
\begin{align}
&\!\!\!\!\!{}_2F_1\left[a,b;c;z\right]  = \tfrac{\Gamma(c)\Gamma(b-a)}{\Gamma(b)\Gamma(c-a)} (-z)^{-a}{}_2F_1\left[a,a\!+\!1\!-\!c;a+\!1\!-\!b;\tfrac{1}{z}\right] \nonumber \\
&~~~~~~ +  \tfrac{\Gamma(c)\Gamma(a-b)}{\Gamma(a)\Gamma(c-b)} (-z)^{-b}{}_2F_1\left[b,b\!+\!1\!-c;b+\!1\!-\!a;\tfrac{1}{z}\right] \label{eqn:trans}
\end{align}
and  considering the
series-form expression of ${}_2F_1[\cdot] $, we can approximate
$\mathcal{Z}_j(x)$ for large value of $x$ as 
\begin{align}
 \mathcal{Z}_j(x)& = \tfrac{\Gamma\left(1-\frac{2}{\alpha}\right)\Gamma\left(M_j + \frac{2}{\alpha}\right)}{\Gamma(M_j)  } \big(\tfrac{e^x-1}{\widehat M_j}\big)^{\frac{2}{\alpha}} \!\!  - 1 + \mathcal{O}\left((e^x-1)^{-M_j}\right) \nonumber\\
&\approx \tfrac{\Gamma\left(1-\frac{2}{\alpha}\right)\Gamma\left(M_j + \frac{2}{\alpha}\right)}{\Gamma(M_j) \widehat M_j{}^{\frac{2}{\alpha}}} e^{\frac{2x}{\alpha}} - 1 \triangleq \mathcal{Z}_{j, high}(x) \label{eqn:2F1high} 
\end{align}
where $\Gamma (x)$ is the Gamma function, and the approximation is
accurate when $e^x - 1 \gg 1$, i.e., $x \gg \ln 2$.

By using the
approximation $\mathcal{Z}_j(x) \approx \mathcal{Z}_{j,low}(x)$ for $x\in
[0,\ln 2]$ and $\mathcal{Z}_j(x) \approx \mathcal{Z}_{j,high}(x)$ for $x\in
[\ln 2, \infty)$, we can approximate $\bar R_1$ as 
\begin{align}
 \bar R_1&   \approx   \frac{\lambda_1}{\mathcal{P}_1} \int_{0}^{\ln 2} \bigg(\sum_{j=1}^{2} \lambda_j \widehat{P}_j^{\frac{2}{\alpha}}  \left( 1 + p_{a,j}  \mathcal{Z}_{j,low}(x)\right) \bigg)^{-1} dx \nonumber \\
& ~~~   + \frac{\lambda_1}{\mathcal{P}_1} \int_{\ln 2}^{\infty} \bigg(\sum_{j=1}^{2} \lambda_j \widehat{P}_j^{\frac{2}{\alpha}} \left(1 + p_{{\rm a},j} \mathcal{Z}_{j,high}(x) \right) \bigg)^{-1}dx \nonumber \\
& =   \frac{\alpha-2}{2M_1}  \mathcal{C}_1 \ln \left(1+ \frac{2M_1}{\alpha-2}\!\cdot\! \frac{\ln 2}{\mathcal{C}_1}\right)  + \frac{\alpha}{2} \mathcal{C}_2  \ln \big(1 + \mathcal{C}_3 4^{-\frac{1}{\alpha}} \big) \label{eqn:cER1} 
\end{align}
where $\mathcal{C}_2 \triangleq \big( \sum_{j=1}^{2} \lambda_j
\widehat{P}_j{}^{\frac{2}{\alpha}} \big) / \big( \sum_{j=1}^{2} (1-p_{{\rm
a},j}) \lambda_j \widehat{P}_j{}^{\frac{2}{\alpha}} \big) $, $\mathcal{C}_3
\triangleq \big(\sum_{j=1}^{2} (1-p_{{\rm a},j}) \lambda_j
\widehat{P}_j{}^{\frac{2}{\alpha}} \big) / \big( \sum_{j=1}^{2} p_{{\rm
a},j} \lambda_j \widehat{P}_j{}^{\frac{2}{\alpha}} \mathcal{M}_j \big)$, and
$\mathcal{M}_j \triangleq {\Gamma(1-\frac{2}{\alpha})\Gamma(M_j +
\frac{2}{\alpha})}/\big({\Gamma(M_j)
\widehat{M}_j{}^{\frac{2}{\alpha}}}\big)$.

By substituting  \eqref{eqn:cER1} and \eqref{eqn:cER2} into \eqref{eqn:ASE1}, we can obtain the closed-form expression of an approximated ASE, which is accurate as shown by
simulation later.

\section{ASE of Cache-enabled HetNet}
We consider that each helper node caches the $N_c$
most popular files, which is the optimal caching policy in terms of cache hit-ratio when each user can only associate with one
node \cite{Andy2012}.

We call a user whose requested file is cached at the helper node as a
\textit{cache-hit} user and the others as \textit{cache-miss} users. The
probability that the typical user is a cache-hit user is 
\begin{equation}
 	p_{\rm h} = \sum_{f=1}^{N_c} p_f = \frac{\sum_{f=1}^{N_c} f^{-\delta}}{\sum_{n=1}^{N_f} n^{-\delta}} \label{eqn:ph} 
\end{equation}
Since the users are distributed as PPP and the file requests of the users
are independent and identical, the distribution of cache-hit users and
cache-miss users also follow PPPs,  respectively with density $p_{\rm h} \lambda_u$ and
$(1-p_{\rm h})\lambda_u$.

Considering that helpers are not connected with backhaul,
cache-miss users can only associate with the macro tier, while cache-hit
users can associate  either with macro or helper tier.

For the cache-hit users, the cell association is based on the maximal average receive power.
Then, the probability that a cache-hit user is associated with the
$k$th tier can be obtained from Lemma 1 in \cite{flexible} as 
\begin{equation}
  \mathcal{P}_{{\rm h},k} = 2 \pi \lambda_k \int_{0}^{\infty} r e^{-\pi \sum_{j=1}^{2} \lambda_j \widehat{P}_j{}^{2/\alpha_j} r^{2/\widehat{\alpha}_j} }dr 
\end{equation}
Since the cache-miss users can only associate with the macro tier, the tier
association probability is $\mathcal{P}_{{\rm m},k} = 1$
for $k = 1$ and $ \mathcal{P}_{{\rm m},k} =0$ for $k = 2$.

From the law of total probability, the probability that the
typical user is associated with a MBS or a helper is 
\begin{align}
\mathcal{P}_1 & = p_{\rm h} \mathcal{P}_{{\rm h},1} + (1-p_{\rm h})\mathcal{P}_{{\rm m},1} = p_{\rm h} \mathcal{P}_{{\rm h},1} + 1-p_{\rm h} \label{eqn:ph1}\\
\mathcal{P}_2 & = p_{\rm h} \mathcal{P}_{{\rm h},2} + (1-p_{\rm h})\mathcal{P}_{{\rm m},2} = p_{\rm h} \mathcal{P}_{{\rm h},2} 
\end{align}
From \eqref{eqn:ph1} and using the
conditional probability formula, we can also obtain the probability that a typical
user associated with the MBS is a cache-hit user as $p_{1, {\rm
h}} = {p_{\rm h} \mathcal{P}_{{\rm h},1}}/{\mathcal{P}_1}$,  which is essential for the following
derivation.

Similar to the conventional HetNet, the probability that a BS in the
$k$th tier is active is $ p_{{\rm a}, k} \approx 1 - \big(1 +
\tfrac{\mathcal{P}_k\lambda_u}{3.5\lambda_k}\big)^{-3.5}$.

Considering that the cell association of cache-hit users is based on maximal average receive power (the same as in conventional HetNet), the
average achievable rate of the typical cache-hit user associated with the
$k$th tier can be obtained similarly as we deriving \eqref{eqn:ER1}, which is
\begin{multline}
  \bar R_{{\rm h},k} = \frac{2\pi\lambda_k}{\mathcal{P}_{{\rm h},k}} \int_{0}^{\infty}\!\! \int_{0}^{\infty}\!\! \exp\bigg(- M_k P_k^{-1}r^{\alpha_k} (e^x-1)\sigma^2 \\
  - \pi\sum_{j=1}^{2} \lambda_j\widehat{P}_j{}^{\frac{2}{\alpha_j}}r^{\frac{2}{\widehat{\alpha}_j}}
 ( 1 + p_{a,j} \mathcal{Z}_j(x) ) \bigg)  rdxdr \label{eqn:ERh}
\end{multline}
Different from the conventional pico tier, when a cache-hit user
is associated with a helper, the helper node can fetch the
requested file from its local cache, hence the achievable rate is
no longer limited by the backhaul capacity.

Since cache-miss users can only  associate with the macro tier, the PDF of
the distance between a typical cache-miss user and its serving MBS
 is \cite{tractable}
\begin{equation}
f_{r_1}(r) = 2\pi \lambda_1 r e^{-\lambda_1 \pi r^2} \label{eqn:PDFmiss} 
\end{equation}
Similar to the derivation of \eqref{eqn:ERk} and \eqref{eqn:ERkrk}, we can
obtain the average achievable rate of the typical cache-miss user as 
\begin{equation}
  \bar{R}_m \!\!=\!\!\int\limits_{0}^{\infty} \!\int\limits_{0}^{\infty}\!\!f_{r_1}\!(r) e^{-\frac{M_1 }{P_1} r^{\alpha_1}\!(e^x\!-1)\sigma^2}\!\prod\limits_{j=1}^{2} \!\mathcal{L}_{I_{1,j}} \!\big(\tfrac{ M_1}{P_1} r^{\alpha_1}\!(e^x\!-1)\big) dx dr \nonumber 
\end{equation}
where $\mathcal{L}_{I_{1,j}}$ is given in the following proposition.
\begin{proposition}
	The Laplace transform of the interference from the $j$th tier when the typical cache-miss user is associated with the macro tier, $\mathcal{L}_{I_{1,j}}(s)$,  is
	\begin{equation}
		\left\{ \begin{array}{ll}
		\!\!e^{ -\pi p_{{\rm a}, 1}\lambda_1 r^2 \big(\!{}_2F_1\big[-\frac{2}{\alpha_1},M_1; 1- \frac{2}{\alpha_2}; -\frac{sP_1 }{M_1}r^{-\alpha_1} \!\big]\! - 1\big) },&\!\!\!\!\!j = 1\\
		\!\!e^{ -\pi p_{{\rm a}, 2} \lambda_2   {\Gamma(1-\frac{2}{\alpha_2})\Gamma(M_2 + \frac{2}{\alpha_2}\!)}{\Gamma(M_2)}^{-1}  \big(\frac{sP_2}{M_2}\big)^{\frac{2}{\alpha_2}} },&\!\!\!\!\!j = 2
		\end{array} \right. \label{eqn:Laplacemiss2}
	\end{equation}
\end{proposition}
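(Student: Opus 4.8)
The plan is to compute $\mathcal{L}_{I_{1,j}}(s) = \mathbb{E}\big[e^{-sI_{1,j}}\big]$ through the probability generating functional (PGFL) of the thinned PPP of active BSs, exactly as in the proof of Proposition 1 (Appendix A), but with one crucial change in the spatial integration limits that reflects the different association rule obeyed by cache-miss users. First I would write $I_{1,j} = \sum_{i\in\tilde\Phi_j} P_j h_{j,i} r_{j,i}^{-\alpha_j}$ over the active BSs of the $j$th tier, whose density is $p_{{\rm a},j}\lambda_j$, and average over the independent gamma-distributed fading $h_{j,i}\sim\mathbb{G}(M_j,\tfrac{1}{M_j})$. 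Using the Laplace transform of a gamma variable, $\mathbb{E}_h\big[e^{-sP_j h v^{-\alpha_j}}\big] = \big(1+\tfrac{sP_j}{M_j}v^{-\alpha_j}\big)^{-M_j}$, the PGFL yields
\begin{equation}
\mathcal{L}_{I_{1,j}}(s) = \exp\!\left(-2\pi p_{{\rm a},j}\lambda_j \!\int_{d_j}^{\infty}\!\! \Big(1 - \big(1+\tfrac{sP_j}{M_j}v^{-\alpha_j}\big)^{-M_j}\Big) v\, dv\right), \nonumber
\end{equation}
where $d_j$ is the radius of the exclusion region around the typical user, i.e., the smallest possible distance to an interferer in tier $j$.

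The key observation is how $d_j$ differs between the two tiers. For $j=1$, the cache-miss user is served by its nearest MBS at distance $r$, so every interfering MBS lies at distance at least $r$ and $d_1 = r$. For $j=2$, by contrast, a cache-miss user associates with the macro tier no matter how close a helper may be---it never compares received powers against the helper tier---so there is \emph{no} exclusion region for helper interferers and $d_2 = 0$. This is exactly why the two branches of \eqref{eqn:Laplacemiss2} take such different forms.

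For $j=1$ with $d_1 = r$, the integral is identical to the one in Appendix A, and the standard substitution reduces it to the Gauss hypergeometric form, recovering the first branch. For $j=2$ with $d_2 = 0$, I would instead evaluate the integral in closed form: after the substitution $u = \tfrac{sP_2}{M_2}v^{-\alpha_2}$ it becomes proportional to $\int_0^\infty\big(1-(1+u)^{-M_2}\big)u^{-2/\alpha_2-1}du$, and a single integration by parts converts this into the Beta integral $\int_0^\infty u^{-2/\alpha_2}(1+u)^{-M_2-1}du = B\big(1-\tfrac{2}{\alpha_2},M_2+\tfrac{2}{\alpha_2}\big)$. Writing the Beta function as a ratio of Gamma functions and collecting the constants yields the second branch of \eqref{eqn:Laplacemiss2}.

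The main obstacle is the $j=2$ evaluation: recognizing the improper integral as a Beta function and verifying that the integration-by-parts boundary terms vanish, which holds precisely when $0 < \tfrac{2}{\alpha_2} < 1$, i.e., $\alpha_2 > 2$, the same condition needed for interference convergence. The conceptually decisive---though technically trivial---step is justifying $d_2 = 0$: it is the association rule for cache-miss users, rather than any fading computation, that collapses the hypergeometric expression of the general case into the simple Gamma-product form.
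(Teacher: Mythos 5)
Your proposal is correct, and it rests on exactly the same key modeling observation as the paper: for $j=1$ the nearest-MBS association of cache-miss users forces an exclusion radius $d_1=r$, while for $j=2$ the fact that cache-miss users never compare against (and never associate with) the helper tier means helpers can lie arbitrarily close, so $d_2=0$. Where you diverge is in how the $j=2$ branch is evaluated. The paper (Appendix B) never evaluates the integral afresh: it writes down the general Proposition~1 expression with lower limit $r_{0,j}$, sets $r_{0,1}=r$, and obtains the helper-tier branch as the limit $\lim_{r_{0,2}\to 0} r_{0,2}^2\big({}_2F_1\big[-\tfrac{2}{\alpha_2},M_2;1-\tfrac{2}{\alpha_2};-\tfrac{sP_2}{M_2}r_{0,2}^{-\alpha_2}\big]-1\big)$, extracted from the hypergeometric connection formula \eqref{eqn:trans}. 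You instead evaluate the improper integral directly: substitution to $u=\tfrac{sP_2}{M_2}v^{-\alpha_2/2}$, integration by parts, and the Beta integral $\int_0^\infty u^{-2/\alpha_2}(1+u)^{-M_2-1}du = B\big(1-\tfrac{2}{\alpha_2},M_2+\tfrac{2}{\alpha_2}\big)$; after using $\Gamma(M_2+1)=M_2\Gamma(M_2)$ this reproduces the stated Gamma-product exponent exactly. The paper's route is shorter because it recycles Proposition~1 and the already-stated transformation \eqref{eqn:trans}; your route is more elementary and self-contained (no hypergeometric asymptotics needed), and it has the additional merit of surfacing the condition $\alpha_2>2$ --- needed for the boundary terms to vanish and for the interference to be finite --- which the paper leaves implicit inside the limit formula.
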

\begin{proof}
See Appendix B
\end{proof}

Substituting \eqref{eqn:Laplacemiss2} and \eqref{eqn:PDFmiss} into
\eqref{eqn:ERkrk} and setting $C_{\rm bh} \to \infty$, we obtain the average
achievable rate of the typical cache-miss user as,
\begin{align}
& \bar R_{\rm m}     =  2\pi\lambda_1 \int_{0}^{\infty}\!\!\int_{0}^{\infty}\! \exp\Big( -M_1P_1^{-1} r^{\alpha_1}(e^x - 1) \sigma^2   \nonumber\\
&  -\pi \lambda_1 r^2 \left(1 + p_{{\rm a}, 1} \mathcal{Z}_1 (x) \right) -\pi  \lambda_2   r^{\frac{2}{\widehat\alpha_2}}\widehat{P}_2{}^{\frac{2}{\alpha_2}} p_{{\rm a}, 2}\times  \nonumber  \\
&  ~   {\Gamma\big(1-\tfrac{2}{\alpha_2}\big)\Gamma\big(M_2 +  \tfrac{2}{\alpha_2}\big)}{\Gamma(M_2)}^{-1} \widehat{M}_2{}^{-\frac{2}{\alpha_2}} (e^x - 1)^{\frac{2}{\alpha_2}} \Big) r dxdr  \label{eqn:ERm}
\end{align}
where $\mathcal{Z}_1(x)$ is defined in \eqref{eqn:ER2}.

According to the law of total probability, the average cell throughput of a randomly chosen active MBS is 
\begin{equation}
  \bar R_1 = \sum_{n_{\rm h}=0}^{M_1} p_{n_{\rm h}}(n_{\rm h} \bar R_{{\rm h, 1}} + (M_1-n_{\rm h}) \bar R_{m})  \label{eqn:ER1cache}
\end{equation}
where $p_{n_{\rm h}} = \binom{M_1}{n_{\rm h}}p_{1,\rm h}^{k} (1-p_{1, {\rm
h}})^{M_1-k}$ is the probability that $n_k$ users among the $M_1$ random
scheduled users are cache-hit users, and $p_{1, {\rm h}}$ is given after
\eqref{eqn:ph1}, $ n_{\rm h} \bar R_{{\rm h, 1}} + (M_1-n_{\rm h})\bar
R_{m}$ is the average throughput of each macro cell conditioned on that $n_k$
users among the $M_1$  users are cache-hit users.

Since the helper tier can only serve cache-hit users, the average cell throughput of a randomly chosen active helper is $\bar R_2 = \bar R_{\rm h, 2}$. Then, the
ASE of the network can be obtained as
\begin{equation}
  {\rm ASE} = \sum_{j=1}^{2}p_{{\rm a},j}\lambda_j \bar{R}_j \label{eqn:ASE2}
\end{equation}
 In the following, we derive closed-form expressions for approximated $\bar R_{{\rm h},k}$ and $\bar R_{\rm m}$ in a special case.

\subsection{Special Case}
Again, by neglecting thermal noise and considering equal path loss for both tiers, similar to the
derivation of $\eqref{eqn:cER1}$, the average achievable rate of the typical cache-hit user can be approximated
as 
\begin{equation}
\hspace{-0.8mm}   \bar R_{{\rm h},k} \approx \frac{\alpha-2}{2M_k}\mathcal{C}_1 \!\ln \left(\! 1 + \frac{2M_k}{\alpha-2} \!\cdot\! \frac{\ln 2}{\mathcal{C}_1}\right)  +  \frac{\alpha}{2} \mathcal{C}_2 \ln \big(1 + \mathcal{C}_3 4^{-\frac{1}{\alpha}} \big) \hspace{-3.5mm} \label{eqn:cERh} 
\end{equation}
where $\mathcal{C}_1$ is given in \eqref{eqn:cER2},  $\mathcal{C}_2$ and
$\mathcal{C}_3$ are given in \eqref{eqn:cER1}.

Considering $\sigma^2 = 0$,
$\alpha_j= \alpha$ and with a change of variables $r^2 = v$ in
\eqref{eqn:ERm}, $\bar R_{\rm m}$ can be derived as
\begin{multline}
\bar R_{\rm m}    = \lambda_1 \int_{0}^{\infty} \!\Big(  \lambda_1 \left(1 + p_{{\rm a}, 1} \mathcal{Z}_1 (x) \right) +  \lambda_2 \widehat{P}_2{}^{\frac{2}{\alpha}} p_{{\rm a}, 2}\times    \\
    {\Gamma\big(1-\tfrac{2}{\alpha_2}\big)\Gamma\big(M_2 +  \tfrac{2}{\alpha_2}\big)}{\Gamma(M_2)}^{-1} \widehat{M}_2^{-\frac{2}{\alpha_2}}  (e^x - 1)^{\frac{2}{\alpha_2}} \Big)^{-1} dx  \nonumber
\end{multline}

Considering that $\mathcal{Z}_j(x)$ can be approximated as
$\mathcal{Z}_{j,low}(x)$ given in  \eqref{eqn:2F1low} for $x\in[0,\ln2]$ and
as $\mathcal{Z}_{j,high}(x)$ given in  \eqref{eqn:2F1high} for
$x\in[\ln2,\infty)$, the average achievable rate of the typical cache-miss user can be approximated
as  
\begin{align}
\hspace{-1.5mm}\bar R_{\rm m} &   \approx \lambda_1 \int_{0}^{\ln2}\! \big( \lambda_1 ( 1 + p_{{\rm a},1}\mathcal{Z}_{1, low})\big)^{-1} dx \nonumber \\
&  ~~~+\lambda_1 \int_{\ln2}^{\infty} \bigg(\sum_{j=1}^{2} \lambda_j \widehat{P}_j{}^{\frac{2}{\alpha}} \left(1 + p_{{\rm a},j} \mathcal{Z}_{j,high}(x) \right) \bigg) ^{-1}dx \nonumber \\
&    = \!\frac{\alpha -2}{2p_{\rm a, 1}M_1} \ln \left( 1 \!+ \frac{2p_{\rm a, 1}M_1}{\alpha-2}\!\ln 2  \right) +  \frac{\alpha}{2}\mathcal{C}_4 \! \ln\!  \big(1 \! + \mathcal{C}_3 4^{-\frac{1}{\alpha}} \big)  \hspace{-1.5mm}\label{eqn:cERm}
\end{align}
where we also use the approximation $e^x - 1 = \mathcal{O}(x) \approx 0$ for
$x\in[0,\ln2]$ and  $e^x - 1 \approx e^x$ for $x\in[\ln2, \infty)$, and $\mathcal{C}_4 \triangleq 1 / \big( \sum_{j=1}^{2} (1-p_{{\rm
		a},j}) \lambda_j \widehat{P}_j{}^{\frac{2}{\alpha}} \big)$.

By substituting \eqref{eqn:cERh} and \eqref{eqn:cERm} into \eqref{eqn:ASE2}, we can obtain the closed-form expression of an approximated ASE, which is accurate as shown by
simulation later.

\section{Numerical and Simulation Results}
In this section, we validate the analytical results via simulations and compare the performance of conventional
and cache-enabled HetNets by numerical results.

The simulation parameters are given in Table
\ref{tab:para}. To reflect the impact of the file catalog size, we use normalized cache capacity $\eta =
\frac{N_c}{N_f}$ in the following.

\begin{table}[!htb]
	\centering
	\caption{Simulation parameters}
	\begin{tabular}{ll}
		\toprule
		 Parameters & Value \\
		\midrule
		 MBS density, $\lambda_1$ & $1/(500^2 \pi)$ m$^{-2}$ \cite{flexible}\\
%		 PBS/Helper node density, $\lambda_2$ & Multiple values\\
		 User density, $\lambda_u$ & $50/(500^2 \pi)$ m$^{-2}$ \\
		 Path loss exponent, $\alpha$ & 3.7 \\
		 Transmit power of each MBS, $P_1$    & 46 dBm\\
		 Transmit power of each PBSs/helper node, $P_2$     & 21 dBm\\
		 Number of antennas at each MBS, $M_1$ & 4\\
		 Transmission bandwidth, $W$  & 20 MHz \\
		 File catalog size, $N_f$ & $10^5$ files \cite{Andy2012} \\
		 Zipf-like distribution skew parameter, $\delta$ & 0.8 \cite{woo2013comparison} \\
		\bottomrule
	\end{tabular}%

	\label{tab:para}%
\end{table}%

\begin{figure}[!htb]
	\centering
	\subfigure[ASE v.s. helper/PBS density $\lambda_2$.]{
	\label{fig:ASE_lambda}
	\hspace{0mm}\includegraphics[width=0.40\textwidth]{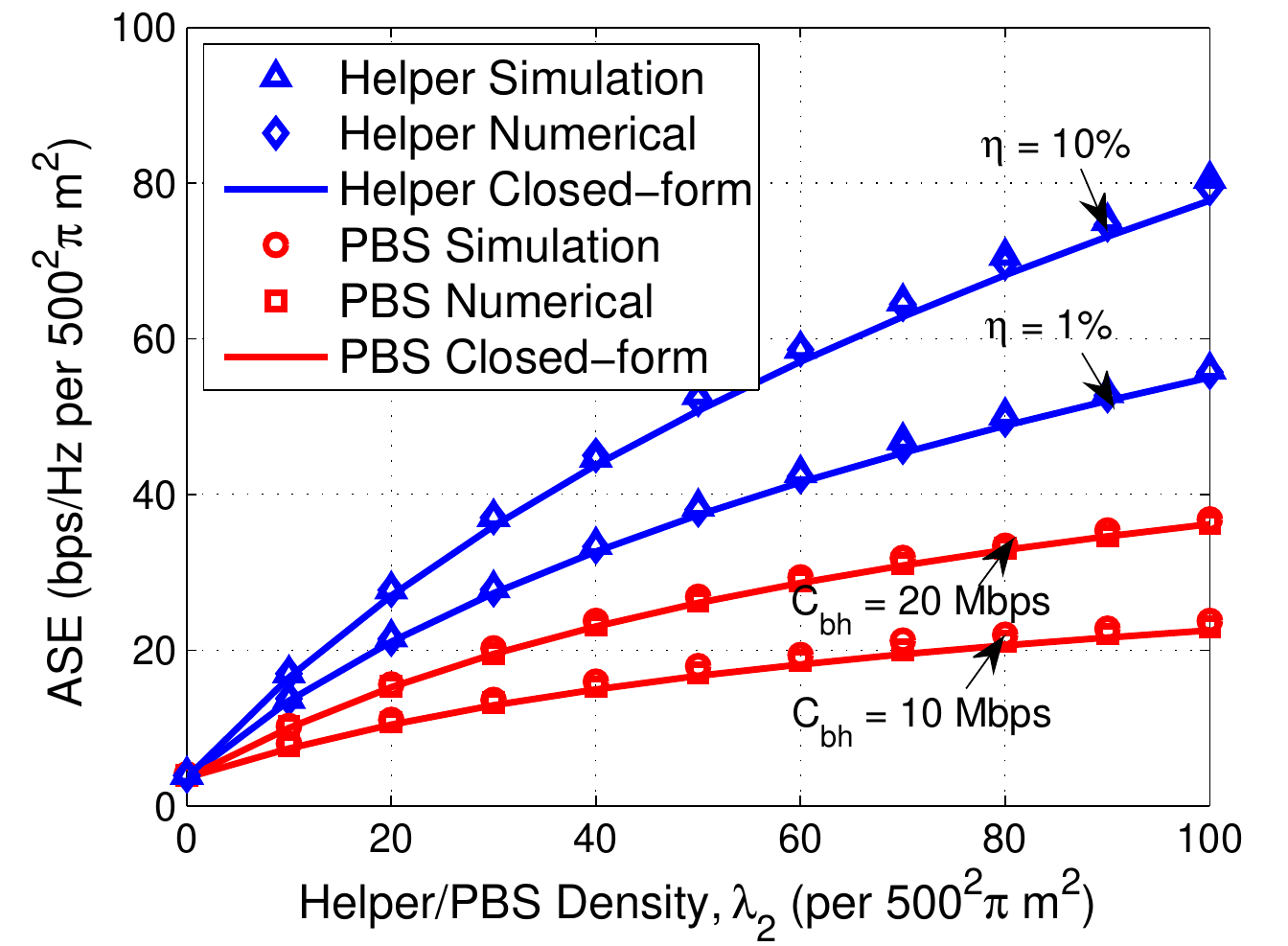}\hspace{0mm}}
	\subfigure[ASE v.s. $\eta$, $\lambda_2 = 50/(500^2\pi)$.]{
	\label{fig:ASE_eta}
	\hspace{0mm}\includegraphics[width=0.40\textwidth]{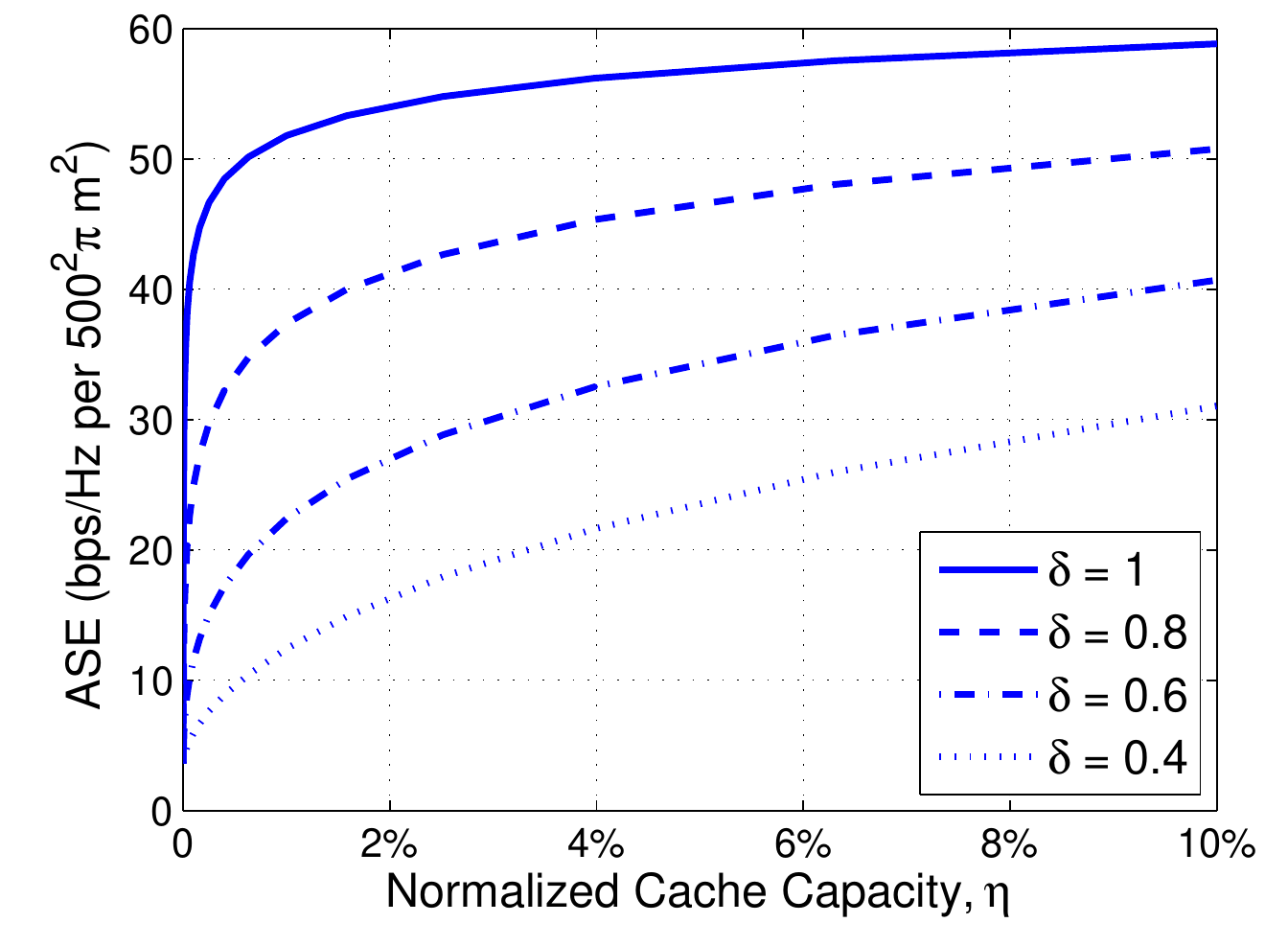} \hspace{0mm}}
	\caption{ASE v.s. helper/PBS density and normalized cache capacity.}
	\label{fig:ASE} %% label for entire figure	
\end{figure}

In Fig. \ref{fig:ASE_lambda}, we compare the ASE of the two HetNets, where the numerical results are obtained from substituting \eqref{eqn:ER2} and \eqref{eqn:ER1} into \eqref{eqn:ASE1} for conventional HetNet and substituting \eqref{eqn:ERh} and \eqref{eqn:ERm} into \eqref{eqn:ASE2} with \eqref{eqn:ER1cache} for cache-enabled HetNet, the results of the closed-form expression are obtained from substituting \eqref{eqn:cER2} and \eqref{eqn:cER1} into \eqref{eqn:ASE1} for conventional HetNet and substituting \eqref{eqn:cERh} and \eqref{eqn:cERm} into \eqref{eqn:ASE2} with \eqref{eqn:ER1cache} for cache-enabled HetNet. We can see
that the results obtained from closed-form expressions are very close to the numerical and
simulation results. In fact, same conclusion can be obtained from the simulation results using typical values of $\alpha_j$ for different tiers, which are not shown for conciseness. Note that in the simulation, $\sigma^2 \neq 0$. These indicate that the approximations are very accurate even without the assumption of $\sigma^2 = 0$,
$\alpha_j= \alpha$. Hence, in the sequel we only provide the analytical results obtained from the closed-form expressions.
Compared with the conventional HetNet with limited-capacity backhaul (e.g., $C_{\rm bh}=10$
Mbps), the cache-enabled HetNet can double the ASE when each helper node only
caches 1\% of the total files. Alternatively, to achieve the same ASE, the
helper node density is much lower than the PBS density, which can
reduce the deploying and operating cost remarkably (e.g., when
$C_{\rm bh} = 10$ Mbps and $\eta = 1\% $, the helper node density is
about 1/4 of the PBS density to achieve an ASE of $20/(500^2\pi)$ bps/Hz/m$^2$). As expected, the ASE of cache-enabled HetNet increases with $\eta$. Furthermore, when $\delta$ is larger, the ASE grows with $\eta$  more rapidly for small value of $\eta$ and grows with $\eta$ more slowly for large $\eta$.

\begin{figure}[!htb]
	\centering
	\includegraphics[width=0.42\textwidth]{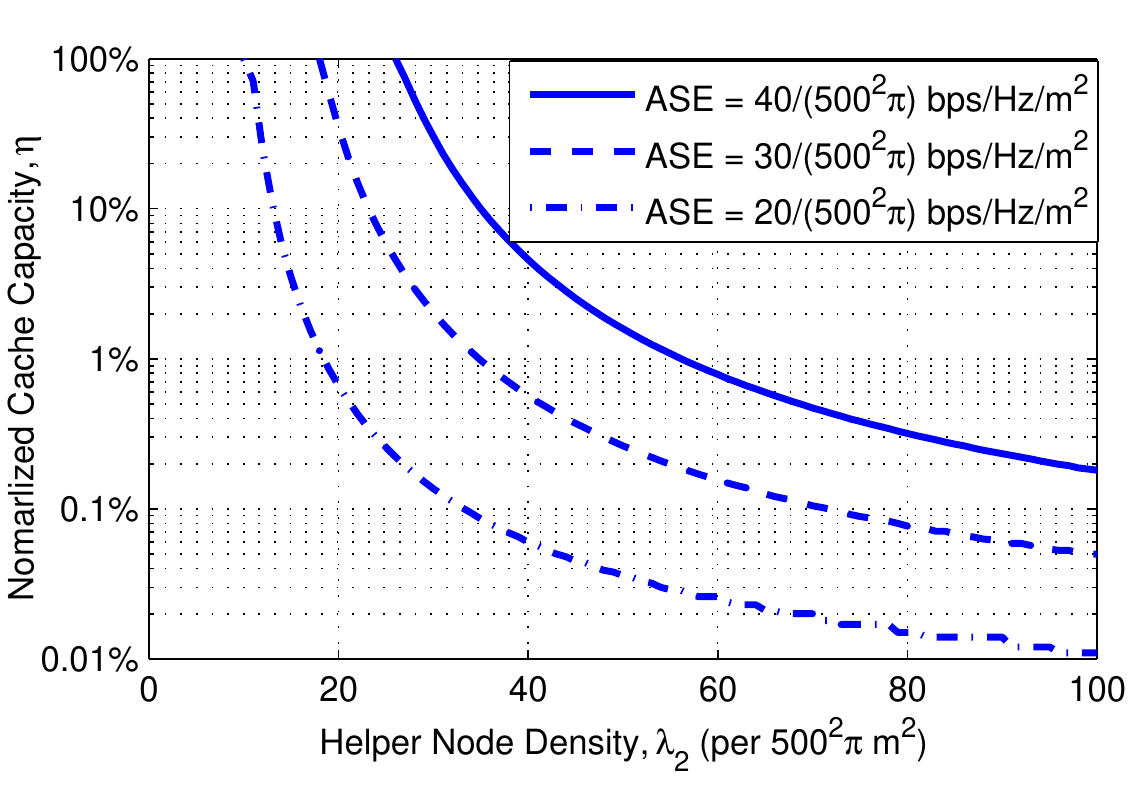}
	\caption{Trade-off between helper density and cache capacity. }
	\label{fig:tradeoff} %% label for first subfigure
\end{figure}

Since the ASE of cache-enabled HetNet can be improved either by increasing
cache capacity or increasing helper density, a natural question is that how much helper
density can be traded off by cache capacity to achieve a target
ASE? To answer this question, we set the ASE as different values and show
the normalized cache capacity versus helper density in Fig.
\ref{fig:tradeoff}. With a given target ASE and helper density, $\eta$ can be found by substituting \eqref{eqn:ph},
\eqref{eqn:ER1cache} into \eqref{eqn:ASE2} and then using the bisection
search method. It is shown that we can reduce the helper
density by increasing the cache capacity of each helper. For example,
to achieve a target ASE of $20/(500^2 \pi)$ bps/Hz/m$^2$, by increasing
the cache capacity from $\eta = 0.01\%$ to $\eta =
0.1\%$, the helper density can be reduced by two thirds. Similar trade-off between BS density and cache
capacity was reported in \cite{EURASIP}, where
a homogeneous network with single antenna BSs was considered and the performance metric was outage probability. In our work, the helper density can be traded off more significantly by the cache capacity.

\begin{figure}[!htb]
	\centering
	\includegraphics[width=0.42\textwidth]{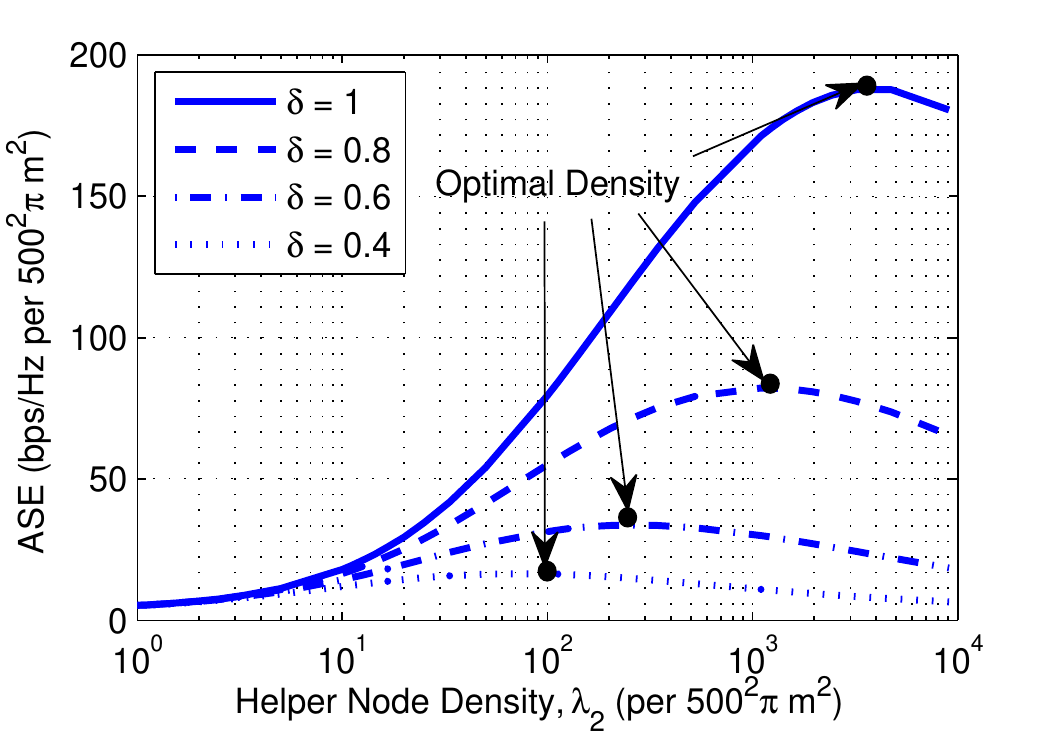}
	\caption{ASE v.s. helper  density with given $\lambda_2N_c  = 10^4/(500^2\pi)$.}
	\label{fig:delta} %% label for first subfigure
\end{figure}

Inspired by such a trade-off, another natural question is: with a given total amount of cache capacity
within an area, should we deploy the caches in a distributed manner (i.e., more helpers each with less cache capacity) or in a centralized manner (i.e.,
less helpers each with more cache capacity) in order to maximize
the ASE? To answer this question, we fix the area cache capacity
$\lambda_2N_c $ as a constant and provide the ASE versus the helper
density in Fig. \ref{fig:delta}. We can see that there exists an
optimal helper density maximizing the ASE. Moreover, the optimal  density increases with $\delta$, which means that the more skewed the file
popularity is, the more distributedly we should deploy the caches.
This can be explained from the impact of the following two observations in Figs. \ref{fig:ASE_lambda} and \ref{fig:ASE_eta}. On one
hand, with given cache capacity of each helper, the ASE increases with
the helper density first rapidly and then slowly. On the other hand, with
given helper density, the ASE reduces with the decrease of cache
capacity first slowly and then rapidly, and the larger $\delta$ is, the more slowly the
ASE decreases with $\eta$ when the cache capacity is large.
% Combining these two factors, when the total cache capacity of the area is fixed, the ASE first increases and then decreases with the helper node density and the optimal helper node density increases with $\delta$.
\section{Conclusion}
In this paper, we investigated the gain of cache-enabled HetNet over
conventional HetNet with limited capacity backhaul, and addressed the tradeoff
in deploying the cache-enabled HetNet. We obtained closed-form expressions of the approximated ASEs of these two HetNets.
We then used numerical results to show that cache-enabled HetNet can double
the ASE over conventional HetNet with the same BS/helper density. To achieve the same ASE, the helper density is much lower than the PBS density and can be
traded off by cache capacity. For a given total cache capacity within an
area, there exists an optimal helper  density maximizing the ASE.

\appendices 
\section{Proof of Proposition 1} 
The Laplace transform of $I_{k,j}$ can be derived as 
\begin{align}
&   \mathcal{L}_{I_{k,j}} (s) = \mathbb{E}_{\tilde \Phi_j, h_{j,i}} \left[e^{-s \sum_{i\in \tilde\Phi_j} P_j  h_{j,i} r_{j,i}^{-\alpha_j}}\right] \nonumber \\
&  \overset{(a)}{=}  \mathbb{E}_{\tilde \Phi_j} \left[ \prod_{i\in \tilde\Phi_j}\left(1 + \frac{s P_j}{M_j}  r_{j,i}^{-\alpha_j}  \right)^{-M_j}\right]  \nonumber \\
&   \overset{(b)}{=} \exp\left( -2\pi p_{{\rm a}, j}\lambda_j \int_{r_{0,j}}^{\infty} \left(1 - \left({1 + \frac{s P_j}{M_j} u^{-\alpha_j}}  \right)^{-M_j}\right)u du\right) \nonumber\\
&   \overset{(c)}{=} \exp\left(\!-\pi p_{{\rm a}, j}\lambda_j  \int_{r_{0,j}^2}^{\infty}\left(1 - \left( 1  + \frac{s P_j}{M_j} v^{-\frac{\alpha_j}{2}}  \right)^{-M_j}\right) dv \right) \label{eqn:Laplace 1}
\end{align}
 where step $(a)$ follows from $h_{j,i}\sim \mathbb{G}(M_j, \frac{1}{M_j})$,
step $(b)$ follows from the probability generating function of the PPP, and
step $(c)$ is obtained by changing variables as $u^2 =v $.

To derive the integration in \eqref{eqn:Laplace 1}, we first obtain the indefinite integration as 
\begin{align}
&   \int \left(1 - \left( 1  + \frac{s P_j}{M_j} v^{-\frac{\alpha_j}{2}}  \right)^{-M_j}\right)dv  \nonumber \\
&   \overset{(a)}{=} \int dv - \int\sum_{n = 0}^{\infty}  \frac{(-1)^n (M_j)_n}{n!} \left(\frac{sP_j}{M_j} v^{-\frac{\alpha_j}{2}}\right)^{n} dv  \nonumber \\
&   = v - \sum_{n = 0}^{\infty}\frac{(-1)^n (M_j)_n}{n!} \int\left(\frac{sP_j}{M_j} v^{-\frac{\alpha_j}{2}}\right)^{n} dv \nonumber\\
&   = v - v\sum_{n = 0}^{\infty} \frac{ (-\frac{2}{\alpha_j})_n (M_j)_n}{(1-\frac{2}{\alpha_j})_n} \frac{\left(-\frac{sP_j}{M_j} v^{-\frac{\alpha_j}{2}}\right)^{n}}{n! } \nonumber\\
&   \overset{(b)}{=} v\big(1 - {}_2F_1\big[-\tfrac{2}{\alpha_j}, M_j; 1-\tfrac{2}{\alpha_j}; -\tfrac{sP_j }{M_j}v^{-\frac{\alpha_j}{2}}\big]\big)
\end{align}
where step (a) is from the generalized binomial theorem,  $(x)_n \triangleq
x(x+1) \cdots (x + n - 1)$  denotes the rising Pochhammer symbol, step (b)
is from the series-form expression of Gauss hypergeometric function
${}_2F_1[\cdot]$.

By introducing the integration limits and after some
manipulations, we obtain 
\begin{multline}
   \int_{r_{0,j}^2}^{\infty} \left(1 - \left( 1  + \frac{s P_j}{M_j} v^{-\frac{\alpha_j}{2}}  \right)^{-M_j}\right) dv \\
  = r_{0,j}^2 \big({}_2F_1 \big[-\tfrac{2}{\alpha_j}, M_j; 1-\tfrac{2}{\alpha_j}; -\tfrac{sP_j }{M_j}r_{0,j}^{-\alpha_j}\big] - 1\big) \label{eqn:integral}
\end{multline}

The lower limit of the
integration is $r_{0,j} = \widehat{P}_j{}^{\frac{1}{\alpha_j}}r^{\frac{1}{\widehat{\alpha}_j}}$,
which is the possibly  closest distance of the interfering BS in the $j$th
tier.
By considering the integration  limit $r_{0,j}$ and substituting  \eqref{eqn:integral} into
\eqref{eqn:Laplace 1}, the proposition can be proved.

\section{Proof of Proposition B}
Similar to the derivation of \eqref{eqn:Laplace}, we obtain 
\begin{equation}
\mathcal{L}_{I_{1,j}}(s) \! =e^{\! -\pi p_{{\rm a}, j}\lambda_j  r_{0,j}^2 \big(\!{}_2F_1 \big[-\frac{2}{\alpha_j}, M_j; 1-\frac{2}{\alpha_j}; -\frac{sP_j }{M_j}r^{-\alpha_j}\!\big] \!- \!1\big) } \label{eqn:Laplacemiss} 
\end{equation}	
where  $r_{0,j} = r$ for $j =1 $ and $r_{0,j} \to 0$ for $j=2$ (helper tier). The proposition can be proved by substituting $r_{0,j}$ and considering
$\lim_{r_{0,j}\to 0} r_{0,j}^2 \big({}_2F_1
\big[-\tfrac{2}{\alpha_j}, M_j; 1-\tfrac{2}{\alpha_j}; -\tfrac{sP_j
	}{M_j}r_{0,j}^{-\alpha_j}\big] - 1\big) =
	{\Gamma\big(1-\frac{2}{\alpha_j}\big)\Gamma\big(M_1 +
		\frac{2}{\alpha_j}\big)}  {\Gamma(M_j)}^{-1}
		\big(\frac{sP_j}{M_j}\big)^{\frac{2}{\alpha_j}}$ derived from
		\eqref{eqn:trans}.
		
\bibliographystyle{IEEEtran}
\bibliography{dongbib}

% Generated by IEEEtran.bst, version: 1.13 (2008/09/30)
\begin{thebibliography}{10}
\providecommand{\url}[1]{#1}
\csname url@samestyle\endcsname
\providecommand{\newblock}{\relax}
\providecommand{\bibinfo}[2]{#2}
\providecommand{\BIBentrySTDinterwordspacing}{\spaceskip=0pt\relax}
\providecommand{\BIBentryALTinterwordstretchfactor}{4}
\providecommand{\BIBentryALTinterwordspacing}{\spaceskip=\fontdimen2\font plus
\BIBentryALTinterwordstretchfactor\fontdimen3\font minus
  \fontdimen4\font\relax}
\providecommand{\BIBforeignlanguage}[2]{{%
\expandafter\ifx\csname l@#1\endcsname\relax
\typeout{** WARNING: IEEEtran.bst: No hyphenation pattern has been}%
\typeout{** loaded for the language `#1'. Using the pattern for}%
\typeout{** the default language instead.}%
\else
\language=\csname l@#1\endcsname
\fi
#2}}
\providecommand{\BIBdecl}{\relax}
\BIBdecl

\bibitem{densification}
N.~Bhushan \emph{et~al.}, ``Network densification: the dominant theme for
  wireless evolution into {5G},'' \emph{IEEE Commun. Mag.}, vol.~52, no.~2, pp.
  82--89, Feb. 2014.

\bibitem{Heterogeneous}
A.~Ghosh \emph{et~al.}, ``Heterogeneous cellular networks: From theory to
  practice,'' \emph{IEEE Commun. Mag.}, vol.~50, no.~6, pp. 54--64, Jun. 2012.

\bibitem{Femtocell}
V.~Chandrasekhar, J.~Andrews, and A.~Gatherer, ``Femtocell networks: a
  survey,'' \emph{IEEE Commun. Mag.}, vol.~46, no.~9, pp. 59--67, Sept. 2008.

\bibitem{woo2013comparison}
S.~Woo \emph{et~al.}, ``Comparison of caching strategies in modern cellular
  backhaul networks,'' in \emph{Proc. ACM MobiSys}, 2013.

\bibitem{Andy2012}
N.~Golrezaei, K.~Shanmugam, A.~G. Dimakis, A.~F. Molisch, and G.~Caire,
  ``Femtocaching: Wireless video content delivery through distributed caching
  helpers,'' in \emph{Proc. IEEE INFOCOM}, 2012.

\bibitem{Procach14}
E.~Bastug, M.~Bennis, and M.~Debbah, ``Living on the edge: The role of
  proactive caching in 5{G} wireless networks,'' \emph{{IEEE} Commun. Mag.},
  vol.~52, no.~8, pp. 82--89, Aug. 2014.

\bibitem{GlobalSIP}
D.~Liu and C.~Yang, ``Will caching at base station improve energy efficiency of
  downlink transmission?'' in \emph{Proc. IEEE GlobalSIP}, 2014.

\bibitem{Dong}
------, ``Energy efficiency of downlink networks with caching at base
  stations,'' \emph{IEEE J. Sel. Areas Commun.}, to appear.

\bibitem{flexible}
H.-S. Jo, Y.~J. Sang, P.~Xia, and J.~Andrews, ``Heterogeneous cellular networks
  with flexible cell association: A comprehensive downlink sinr analysis,''
  \emph{IEEE Trans. Wireless Commun.}, vol.~11, no.~10, pp. 3484--3495, Oct.
  2012.

\bibitem{tractable}
J.~Andrews, F.~Baccelli, and R.~Ganti, ``A tractable approach to coverage and
  rate in cellular networks,'' \emph{IEEE Trans. Commun.}, vol.~59, no.~11, pp.
  3122--3134, Nov. 2011.

\bibitem{EURASIP}
E.~Bastug, M.~Bennis, M.~Kountouris, and M.~Debbah, ``Cache-enabled small cell
  networks: modeling and tradeoffs,'' \emph{EURASIP J. on Wireless Commun. and
  Netw.}, vol. 2015, no.~1, 2015.

\bibitem{pushing}
C.~Yang, Z.~Chen, Y.~Yao, and B.~Xia, ``Performance analysis of wireless
  heterogeneous networks with pushing and caching,'' in \emph{Proc. IEEE ICC},
  2015.

\bibitem{tony}
S.~H. Chae, J.~Y. Ryu, T.~Q.~S. Quek, and W.~Choi, ``Cooperative transmission
  via caching helpers,'' in \emph{Proc. IEEE GLOBECOM}, 2015.

\bibitem{golrezaei2013femtocaching}
N.~Golrezaei, A.~F. Molisch, A.~G. Dimakis, and G.~Caire, ``Femtocaching and
  device-to-device collaboration: A new architecture for wireless video
  distribution,'' \emph{IEEE Commun. Mag.}, vol.~51, no.~4, pp. 142--149, Apr.
  2013.

\bibitem{breslau1999web}
L.~Breslau, P.~Cao, L.~Fan, G.~Phillips, and S.~Shenker, ``Web caching and
  {Zipf}-like distributions: Evidence and implications,'' in \emph{Proc. IEEE
  INFOCOM}, 1999.

\bibitem{adhoc}
N.~Jindal, J.~Andrews, and S.~Weber, ``Multi-antenna communication in ad hoc
  networks: Achieving {MIMO} gains with {SIMO} transmission,'' \emph{IEEE
  Trans. Commun.}, vol.~59, no.~2, pp. 529--540, Feb. 2011.

\bibitem{economy}
S.~Lee and K.~Huang, ``Coverage and economy of cellular networks with many base
  stations,'' \emph{IEEE Commun. Lett.}, vol.~16, no.~7, pp. 1038--1040, July
  2012.

\bibitem{offloading}
S.~Singh, H.~Dhillon, and J.~Andrews, ``Offloading in heterogeneous networks:
  Modeling, analysis, and design insights,'' \emph{IEEE Trans. Wireless
  Commun.}, vol.~12, no.~5, pp. 2484--2497, May 2013.

\bibitem{Quek}
Y.~S. Soh, T.~Quek, M.~Kountouris, and H.~Shin, ``Energy efficient
  heterogeneous cellular networks,'' \emph{IEEE J. Sel. Areas Commun.},
  vol.~31, no.~5, pp. 840--850, May 2013.

\bibitem{jeffrey2007table}
A.~Jeffrey and D.~Zwillinger, \emph{Table of integrals, series, and products},
  6th~ed.\hskip 1em plus 0.5em minus 0.4em\relax Academic Press, 2000.

\end{thebibliography}
\end{document}